\numberwithin{equation}{section}
\newtheorem{theorem}{Theorem}
\newtheorem{remark}{Remark}
\newenvironment{proof}{{\bf Proof}.\ }{ \hfill $\square$}
\newcommand{\R}{\mathbb{R}}
\newcommand{\N}{\mathbb{N}}
\begin{document}

\title{On solutions of the reduced model \\ for the dynamical evolution of contact lines}

\author{Dmitry Pelinovsky \\
{\small Department of Mathematics, McMaster
University, Hamilton, Ontario, Canada, L8S 4K1}}

\date{\today}
\maketitle

\begin{abstract}
We solve the linear advection--diffusion equation with a variable speed on a semi-infinite line.
The variable speed is determined by an additional condition at the boundary, which
models the dynamics of a contact line of a hydrodynamic flow at a $180^{\circ}$ contact angle.
We use Laplace transform in spatial coordinate and Green's function for
the fourth-order diffusion equation to show local existence of solutions of the initial-value
problem associated with the set of over-determining boundary conditions. We also analyze 
the explicit solution in the case of a constant speed (dropping the additional boundary condition).
\end{abstract}

\section{Introduction}

Contact lines are defined by the intersection of the rigid and free boundaries of the flow.
Flows with the contact line at a $180^{\circ}$ contact angle were discussed in \cite{Benney,Dussan},
where corresponding solutions of the Navier--Stokes equations were shown to have no physical meanings.
Recently, a different approach based on the lubrication approximation and thin film equations
was developed by Benilov \& Vynnycky \cite{Benilov}. 

As a particularly simple model for the flow shown on Figure \ref{fig-1}, the authors 
of \cite{Benilov} derived the linear advection--diffusion
equation for the free boundary $h(x,t)$ of the flow:
\begin{equation}
\label{pde}
\frac{\partial h}{\partial t} + \frac{\partial^4 h}{\partial x^4} =
V(t) \frac{\partial h}{\partial x}, \quad x > 0, \;\; t > 0.
\end{equation}
The contact line is fixed at $x = 0$ in the reference frame moving with the velocity $-V(t)$
and is defined by the boundary conditions $h|_{x = 0} = 1$ and $h_x |_{x = 0} = 0$.
The flux conservation condition is expressed by the boundary condition $h_{xxx} |_{x = 0} = -\frac{1}{2}$
(take $\alpha^3 = 3$ in equations (5.12)--(5.13) in \cite{Benilov}).
\begin{figure}[htbp]
\begin{center}
\includegraphics[height=5cm]{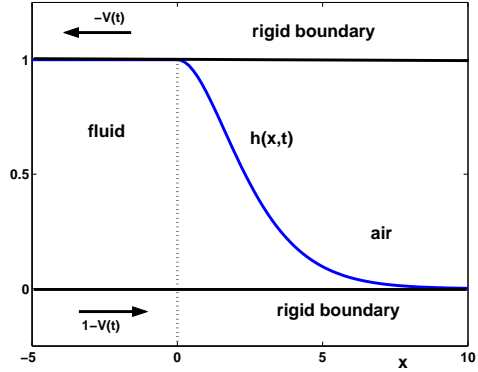}
\end{center}
\caption{Schematic picture of the flow between rigid boundaries.}
\label{fig-1}
\end{figure}

We assume that $h, h_x, h_{xx} \to 0$ as $x \to \infty$: in fact, any constant value of
$h$ at infinity is allowed thanks to the invariance of the linear advection--diffusion 
equation (\ref{pde}) with respect to the shift
and scaling transformations. With three boundary conditions at $x = 0$ and
the decay conditions as $x \to \infty$, the initial-value problem for equation (\ref{pde})
is over-determined and the third (over-determining) boundary condition at $x = 0$ is used
to find the dependence of $V$ on $t$.

We shall consider the initial-value problem with the initial data
$h|_{t = 0} = h_0(x)$ for a suitable function $h_0$. In particular,
we assume that the profile $h_0(x)$ decays monotonically to zero as $x \to \infty$
and that $0$ is a non-degenerate maximum of $h_0$ such that
$h_0(0) = 1$, $h_0'(0) = 0$, and $h_0''(0) < 0$. If the solution $h(x,t)$
losses monotonicity in $x$ during the dynamical
evolution, for instance, due to the value of $h_{xx}(0,t)$ crossing $0$ from the negative side,
then we say that the flow becomes non-physical for further times and
the model breaks. Simultaneously, this may mean that the velocity $V(t)$ blows up, as
it is defined for sufficiently strong solutions of 
the advection--diffusion equation (\ref{pde}) by the contact equation:
\begin{equation}
\label{contact-equation}
h_{xxxxx}(0,t) = V(t) h_{xx}(0,t),
\end{equation}
which follows by differentiation of (\ref{pde}) in $x$ and setting $x \to 0$.

The main claim of \cite{Benilov} based on numerical computations of the reduced equation
(\ref{pde}) as well as more complicated thin-film equations is that for any suitable $h_0$, there is
a finite positive time $t_0$ such that $V(t) \to -\infty$ and $h_{xx}(0,t) \to -0$ as $t \uparrow t_0$.
Moreover, it is claimed that $V(t)$ behaves near the blowup time as the logarithmic function of $t$, e.g.
\begin{equation}
\label{numerical-claim}
V(t) \sim C_1 \log(t_0 - t) + C_2, \quad \mbox{\rm as} \quad t \uparrow t_0,
\end{equation}
where $C_1$, $C_2$ are positive constants.

This paper is devoted to analytical studies of solutions of the advection--diffusion equation (\ref{pde})
and the effects coming from the inhomogeneous boundary condition $h_{xxx}|_{x = 0} = -\frac{1}{2}$
associated with the flux conservation. In particular, we rewrite the evolution equation
for the variable $u = h_x$ in the form
\begin{equation}
\label{pde-u}
u_t + u_{xxxx} = V(t) u_x, \quad \; x > 0, \;\; t > 0,
\end{equation}
subject to the boundary conditions at the contact line
\begin{equation}
\label{bc}
u |_{x = 0} = 0, \quad  u_{xx} |_{x = 0} = -\frac{1}{2}, \quad
u_{xxx} |_{x = 0} = 0, \quad t \geq 0,
\end{equation}
where the boundary conditions $u_{xxx} |_{x = 0} = h_{xxxx} |_{x=0} = 0$ follows
from the boundary conditions $h|_{x = 0} = 1$ and $h_x |_{x =0} = 0$ as well as the original
evolution system (\ref{pde}) as $x \to 0$.

To simplify the problem, we shall also consider the model for given constant $V(t) = V_0$
and drop the third over-determining boundary conditions at the contact line:
\begin{equation}
\label{toy}
\left\{ \begin{array}{l} u_t + u_{xxxx} = V_0 u_x, \quad \quad \quad \quad \quad \; x > 0, \;\; t > 0, \\
u |_{x = 0} = 0, \quad \quad \quad \quad \quad \quad \quad \quad \quad t \geq 0, \\
u_{xx}|_{x = 0} = -\frac{1}{2}, \quad \quad \quad \quad \quad \quad \quad \; t \geq 0.
\end{array} \right.
\end{equation}
Both problems (\ref{pde-u})--(\ref{bc}) and (\ref{toy}) are considered
under the initial condition $u |_{t = 0} = u_0(x)$ with $u_0(0) = 0$, $u_0'(0) < 0$,
and $u_0''(0) = -\frac{1}{2}$, as well as the decay condition $u, u_x, u_{xx} \to 0$ as $x \to \infty$.

Using Laplace transform in spatial coordinate and Green's function for
the fourth-order diffusion equation, we derive an explicit solution
of the boundary-value problem (\ref{toy}). In the case $V_0 = 0$, we show that the
inhomogeneous boundary condition $h_{xxx} |_{x = 0} = u_{xx}|_{x = 0} = -\frac{1}{2}$ leads to
the secular growth of the boundary value $h_{xx} |_{x = 0} = u_x |_{x = 0}$ to positive infinity
as $t \to \infty$. As a result, even if $h_{xx} |_{x=0} < 0$ initially, the
convexity of the solution $h(x,t)$ at the boundary $x = 0$ 
is lost in the finite time. In the case $V_0 < 0$,
we show that no secular growth is observed but the convexity of the solution at the
boundary is still lost in the finite time. Applying the same method, we prove local
existence of solutions of the original boundary-value problem (\ref{pde-u})--(\ref{bc}).
This prepares us to tackle the original conjecture on the finite-time blow-up
in the dynamical behavior of the model, which is still left opened for forthcoming studies.

The remainder of this paper is organized as follows. Section 2 reports explicit
solutions of the boundary--value problem (\ref{toy}) for
$V_0 = 0$ and $V_0 \neq 0$. Section 3 gives the local existence result
for the original problem (\ref{pde-u})--(\ref{bc}). 
Appendix A lists properties of Green's function
for the fourth-order diffusion equation. 

\section{Solution for $V(t) = V_0$}

Because $V(t)$ is nonconstant for the original problem (\ref{pde}),
the Laplace transform in time $t$ is not a useful method for this problem.
On the other hand, since the boundary-value problem is formulated
for half-line, we can use Laplace transform in space $x$:
\begin{equation}
\label{Laplace-x}
U(p,t) = \int_{0}^{\infty} e^{-p x} u(x,t) dx, \quad p > 0.
\end{equation}
We shall develop this method to solve the boundary--value problem (\ref{toy}). 
The explicit solution of this problem will help us to analyze the consequences 
of inhomogeneous boundary condition
$u_{xx}|_{x = 0} = -\frac{1}{2}$ and the constant advection term
$V(t) = V_0$ on the temporal dynamics of the advection--diffusion equation 
with the fourth-order diffusion. 

Let us denote the boundary values:
\begin{equation}
\label{bv-x}
\beta(t) = u_x |_{x=0}, \quad
\gamma(t) = u_{xxx} |_{x=0}.
\end{equation}
Using Laplace transform (\ref{Laplace-x}), we rewrite an evolution problem 
associated with the advection--diffusion equation (\ref{toy}):
\begin{equation}
\label{toy-laplace}
\left\{ \begin{array}{l} U_t + p^4 U - V_0 p U = \gamma(t) - \frac{1}{2}p  + p^2 \beta(t),
\quad t > 0, \\ U |_{t = 0} = U_0(p),\end{array} \right.
\end{equation}
where $U_0$ is the Laplace transform of $u_0 = u|_{t = 0}$. 
By using the variation of parameters, we obtain
\begin{eqnarray}
\nonumber
U(p,t) & = & U_0(p) e^{-t p^4 + t V_0 p} \\
& \phantom{t} & + \int_0^t e^{-(t-s) p^4 + (t-s) V_0 p}
\left( \gamma(s) - \frac{1}{2}p  + p^2 \beta(s) \right) ds.
\label{toy-solution}
\end{eqnarray}
Using the inverse Laplace transform in $x$, we write this solution in the form:
\begin{eqnarray}
\nonumber
u(x,t) & = & \frac{1}{2 \pi i} \int_{c - i \infty}^{c + i \infty}
e^{p x -t p^4 + t V_0 p} \left( \int_0^{\infty} e^{-py} u_0(y) dy \right) dp \\
& \phantom{t} & + \frac{1}{2 \pi i} \int_{c - i \infty}^{c + i \infty} e^{px}
\left( \int_0^t e^{-(t-s) p^4 + (t-s) V_0 p}
\left( \gamma(s) - \frac{1}{2}p  + p^2 \beta(s) \right) ds \right) dp,
\label{toy-solution-inverse}
\end{eqnarray}
where ${\rm Re}(c) > 0$ so that the singularities of the integrand in
the complex $p$-plane remain to the left of the contour of integration.

If $t > 0$ is finite, $u_0 \in L^1(\R_+)$, and $\beta, \gamma \in L^{\infty}_{\rm loc}(\R_+)$,
Fubini's Theorem implies that the integration in $p$ and in $y$, $s$ can be interchanged.
Let us introduce Green's function $G_t(x)$ for the fourth-order diffusion equation
(see Appendix A):
$$
G_t(x) = \frac{1}{2 \pi i} \int_{c + i \infty}^{c + i \infty}
e^{p x - t p^4} dp = \frac{1}{2 \pi} \int_{-\infty}^{\infty} e^{-t k^4 + i k x} dk =
\frac{1}{\pi} \int_0^{\infty} e^{-t k^4} \cos(kx) dk.
$$
Using Green's function, we can rewrite the solution (\ref{toy-solution-inverse}) in the implicit form:
\begin{eqnarray}
\nonumber
u(x,t) & = & \int_0^{\infty} G_t(x+V_0t - y) u_0(y) dy -  \frac{1}{2}
\int_0^t G'_{t-s}(x+V_0(t-s)) ds \\
& \phantom{t} & + \int_0^t \left[ G_{t-s}(x+V_0(t-s)) \gamma(s)
+ G''_{t-s}(x+V_0(t-s)) \beta(s) \right] ds.
\label{solution-implicit}
\end{eqnarray}
The solution is said to be in the implicit form, because the functions $\beta(t)$
and $\gamma(t)$ determined by the boundary conditions (\ref{bv-x}) are not specified yet.

We verify that $\lim_{x \to \infty} u(x,t) = 0$, no matter what $\beta$ and $\gamma$ are,
as long as they are bounded function of $t$. Indeed, by the Lebesgue's Dominated Convergence
Theorem, we have
$$
\int_0^{\infty} G_t(x+V_0t - y) u_0(y) dy \to 0 \quad \mbox{\rm as} \quad x \to \infty
$$
if $u_0 \in L^1(\R)$, because $G_t(x) \to 0$ as $x \to \infty$.
On the other hand, the other three convolution integrals are bounded
if $\beta, \gamma \in L^{\infty}_{\rm loc}(\R_+)$ and $t > 0$ is finite,
because $G_t$, $G_t'$, and $G_t''$ have integrable singularities at $t = 0$. By the same
Lebesgue's Dominated Convergence Theorem, these three integrals decay to zero as $x \to \infty$.

The functions $\beta(t)$ and $\gamma(t)$ are to be found from the integral equations
obtained at the boundary conditions $u(0,t) = 0$ and $u_x(0,t) = \beta(t)$.
These derivations are performed separately for the cases of $V_0 = 0$ and $V_0 \neq 0$.

\subsection{Case $V_0 = 0$}

We rewrite the solution (\ref{solution-implicit}) for $V_0 = 0$:
\begin{equation}
\label{laplace-solution}
u(x,t) = \int_0^{\infty} G_t(x - y) u_0(y) dy -  \frac{1}{2}
\int_0^t G'_{t-s}(x) ds + \int_0^t \left[ G_{t-s}(x) \gamma(s)
+ G''_{t-s}(x) \beta(s) \right] ds.
\end{equation}
Using the boundary values (\ref{bc-Green-1}) and (\ref{bc-Green-2}) for the Greens function
$G_t(x)$ and the boundary condition $u(0,t) = 0$, we evaluate this expression at $x = 0$
and obtain an integral equation for $\beta$ and $\gamma$:
\begin{eqnarray}
\label{gamma}
-\frac{1}{4\pi} \Gamma\left(\frac{1}{4}\right)\int_0^t \frac{\gamma(s)}{(t-s)^{1/4}} ds +
\frac{1}{4\pi} \Gamma\left(\frac{3}{4}\right) \int_0^t \frac{\beta(s)}{(t-s)^{3/4}} ds
& = & \int_0^{\infty} G_t(-y) u_0(y) dy.
\end{eqnarray}

To use the boundary condition $u_x(0,t) = \beta(t)$, we shall recall from equation (\ref{green-self-similar})
that the function $G_t'''(x)$ behaves like $\mathcal{O}(t^{-1})$ for any $x > 0$ and
hence is not integrable in $t$ at $t = 0$. Therefore,
we have to be careful to differentiate the solution in the above convolution form.
The last term of the solution (\ref{laplace-solution}) can be computed 
by using the Fourier transform:
$$
v(x,t):= \int_0^t G''_{t-s}(x) \beta(s) ds = \frac{1}{2\pi} \int_{-\infty}^{+ \infty}
(i k)^2 e^{i k x} \left( \int_0^t e^{-k^4(t-s)} \beta(s) ds \right) dk.
$$
Differentiating this expression in $x$ and integrating by parts in $s$, we obtain
\begin{eqnarray}
\nonumber
v_x(x,t) & = &
\frac{1}{2\pi i} \int_{-\infty}^{+\infty}
(ik)^3 e^{ikx} \left( \int_0^t  e^{-k^4(t-s)} \beta(s) ds \right) dp \\
\nonumber
& = &
\frac{1}{2\pi i} \int_{-\infty}^{+\infty}
\frac{e^{ikx}}{k} \left( \int_0^t \frac{d}{ds} \left(e^{-k^4(t-s)}\right) \beta(s) ds \right) dp \\
\nonumber
& = & \frac{1}{2\pi i} \int_{-\infty}^{+\infty}
\frac{e^{i k x}}{k} \left( \beta(t) - \beta(0) e^{-k^4 t} - \int_0^t e^{-k^4(t-s)} \beta'(s) ds \right) dp \\
& = & \frac{1}{2} \beta(t) - \beta(0) H_t(x) - \int_0^t H_{t-s}(x) \beta'(s) ds. \label{half-residue}
\end{eqnarray}
where
\begin{equation}
H_t(x) := \frac{1}{2\pi i} \int_{-\infty}^{+\infty}
\frac{e^{- t k^4 + i k x}}{k} d k = \frac{1}{\pi} \int_0^{\infty} \frac{e^{-t k^4} \sin(kx)}{k} dk =
\int_0^x G_t(y) dy.
\end{equation}
Here we note that all integrals are evaluated in the principal value sense, because the half-residue at $k = 0$
is canceled out in the resulting expression (\ref{half-residue}). Also we note that the decay of
$v_x$ to zero as $x \to \infty$ is satisfied because of the symmetry and normalization of $G_t$ in (\ref{green-normalization}).
We can now use the boundary conditions (\ref{bc-Green-2}) and $u_x(0,t) = \beta(t)$
to obtain the exact value for $\beta(t)$:
\begin{eqnarray}
\nonumber
\beta(t) & = & 2 \int_0^{\infty} G_t'(-y) u_0(y) dy - \int_0^t G''_{t-s}(0) ds \\
& = & 2 \int_0^{\infty} G_t'(-y) u_0(y) dy + \frac{\Gamma(3/4)}{\pi} t^{1/4}. \label{beta}
\end{eqnarray}
After $\beta(t)$ is found uniquely from (\ref{beta}),
$\gamma(t)$ is found uniquely from the integral equation (\ref{gamma}).
This computation completes the construction of the
exact solution of the boundary--value problem (\ref{toy})
for $V_0 = 0$. Now we turn to the analysis of obtained solution.

\begin{theorem}
Consider the advection--diffusion equation (\ref{toy}) for $V_0 = 0$ with the initial data
$u_0 \in L^1(\mathbb{R}_+)$. Then, there exists a solution $u \in L^{\infty}(\mathbb{R}_+ \times \mathbb{R}_+)$
of the evolution problem in the explicit form (\ref{laplace-solution}), where
$\beta, \gamma \in L^{\infty}_{\rm loc}(\mathbb{R}_+)$ are
defined by (\ref{gamma}) and (\ref{beta}) with $\lim_{t \to \infty} \beta(t) = +\infty$.
\label{theorem-1}
\end{theorem}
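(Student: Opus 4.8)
Since the representation formula (\ref{laplace-solution}) is already in hand, the proof splits into three tasks: (i) solve the boundary integral equations (\ref{beta}) and (\ref{gamma}) for $\beta,\gamma$ in the class $L^\infty_{\rm loc}(\R_+)$; (ii) verify that the function $u$ defined by (\ref{laplace-solution}) with this $\beta,\gamma$ is bounded and genuinely solves (\ref{toy}); and (iii) extract the large-time behaviour of $\beta$. The analytically delicate manoeuvre---taming the non-integrable singularity of $G_t'''$ when the $\beta$-convolution is differentiated---has already been carried out via the half-residue computation, so $\beta$ is \emph{explicit} through (\ref{beta}) and only $\gamma$ remains implicit.

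For (i), (\ref{beta}) defines $\beta$ outright; its membership in $L^\infty_{\rm loc}(\R_+)$ follows from the self-similar scaling $G_t'(x)=t^{-1/2}G_1'(xt^{-1/4})$ together with boundedness of $G_1'$ (Appendix A). Dividing (\ref{gamma}) by $-\Gamma(1/4)/(4\pi)$ recasts it as an Abel integral equation of the first kind, $\int_0^t(t-s)^{-1/4}\gamma(s)\,ds=g(t)$, whose right-hand side $g$ is assembled from the fractional integral $\int_0^t(t-s)^{-3/4}\beta(s)\,ds$ and from $f(t):=\int_0^\infty G_t(-y)u_0(y)\,dy$. Using $\beta\in L^\infty_{\rm loc}$ and the smoothness and scaling of $G_t$ for $t>0$, one checks that $g$ is continuous, $g(t)\to0$ as $t\downarrow0$, and $g\in C^1((0,\infty))$ with $g'\in L^\infty_{\rm loc}(\R_+)$; the classical Abel inversion $\gamma(t)=\tfrac{\sin(\pi/4)}{\pi}\tfrac{d}{dt}\int_0^t(t-s)^{-3/4}g(s)\,ds$ then yields a unique $\gamma\in L^\infty_{\rm loc}(\R_+)$, and uniqueness of the Abel equation gives uniqueness of $(\beta,\gamma)$.

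For (ii), each of the four terms of (\ref{laplace-solution}) solves $u_t+u_{xxxx}=0$ for $x>0$ (because $G_t$ does away from the source) and decays as $x\to\infty$ (already verified); the conditions $u(0,t)=0$ and $u_x(0,t)=\beta(t)$ are exactly (\ref{gamma}) and (\ref{beta}), after which $u_{xx}(0,t)=-\tfrac12$ drops out on comparing $\widehat u$ with (\ref{toy-laplace}). The bound on $u$ is where I expect the real work: the first term of (\ref{laplace-solution}) is the solution of the free fourth-order diffusion and is bounded by standard estimates on $G_t$, while the remaining three are convolutions of $G_t,G_t',G_t''$---each with an integrable singularity at $t=0$ and decaying in $x$---against the locally bounded $\beta,\gamma$, which already gives boundedness on finite time intervals. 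Making the bound uniform in $t$ is the obstacle, since $\beta(t)\to+\infty$ makes the $\beta$-convolution and the flux term $-\tfrac12\int_0^tG_{t-s}'(x)\,ds$ individually unbounded; one must exploit the cancellation forced by $u(0,t)=0$ and $G_t'(0)=0$, subtracting the $x=0$ identity so that the dangerous contributions enter only through the differences $\int_0^t[G_{t-s}(x)-G_{t-s}(0)]\gamma(s)\,ds$ and $\int_0^t[G_{t-s}''(x)-G_{t-s}''(0)]\beta(s)\,ds$, and then quantify the balance via the scaling of $G_\tau(x)-G_\tau(0)$ and the explicit masses $\int_0^\infty[G_\tau''(x)-G_\tau''(0)]\,d\tau$.

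Finally (iii) is read off (\ref{beta}): the secular term $\tfrac{\Gamma(3/4)}{\pi}t^{1/4}$ diverges, while
\[
\Bigl|\,2\int_0^\infty G_t'(-y)\,u_0(y)\,dy\,\Bigr|\ \le\ 2\,t^{-1/2}\,\|G_1'\|_{L^\infty(\R)}\,\|u_0\|_{L^1(\R_+)}\ \longrightarrow\ 0\qquad(t\to\infty)
\]
by the same scaling $G_t'(x)=t^{-1/2}G_1'(xt^{-1/4})$, so $\beta(t)=\tfrac{\Gamma(3/4)}{\pi}t^{1/4}(1+o(1))\to+\infty$. Under the standing hypotheses $u_0(0)=0$, $u_0'(0)<0$ one moreover gets $\beta(0)=u_0'(0)<0$ by an integration by parts in the first term of (\ref{beta}), so $\beta$, being continuous, vanishes at a finite time---the finite-time loss of convexity of $h$ at the contact line announced in the introduction.
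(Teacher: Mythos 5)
Your proposal follows essentially the same route as the paper --- explicit formula (\ref{beta}) for $\beta$, the self-similar bound $|G_t'(x)|\le \|g'\|_{L^\infty}t^{-1/2}$ to kill the $u_0$-convolution as $t\to\infty$, the surviving secular term $\frac{\Gamma(3/4)}{\pi}t^{1/4}$, and then solvability of the weakly singular equation (\ref{gamma}) for $\gamma$ --- so parts (i) and (iii) match the paper's argument step for step. Two points of comparison are worth recording. First, where the paper merely asserts that the integral equation (\ref{gamma}) ``with a weakly singular kernel is well defined and solutions exist,'' you actually recast it as an Abel equation of the first kind and invoke the classical inversion $\gamma(t)=\frac{\sin(\pi/4)}{\pi}\frac{d}{dt}\int_0^t(t-s)^{-3/4}g(s)\,ds$; this is a genuine improvement in rigor (note only that the regularity you need for $g$ near $t=0$ uses more than $u_0\in L^1$ --- you implicitly use the standing pointwise assumptions on $u_0$ at the origin, as does the paper's Remark~1). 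Second, on part (ii) you correctly identify that the \emph{uniform-in-time} bound $u\in L^\infty(\mathbb{R}_+\times\mathbb{R}_+)$ does not follow term by term, since $\beta(t)\sim t^{1/4}$ makes the $\beta$- and flux-convolutions individually grow like $t^{1/2}$; you propose, but do not carry out, a cancellation argument based on subtracting the $x=0$ identity. Be aware that the paper does not carry this out either --- its proof disposes of the claim with the single sentence that $u$ ``is well defined'' by (\ref{laplace-solution}) --- so your proposal is not weaker than the published argument at this point; it is more honest about where the remaining work lies. In short: same strategy, with a sharper treatment of $\gamma$ and an explicitly flagged (rather than silently elided) gap on the global $L^\infty$ bound.
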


\begin{proof}
The convolution integral in the explicit expression (\ref{beta}) can be analyzed from
the Green's function (\ref{green-self-similar}). If $u_0 \in L^1(\mathbb{R})$,
then
$$
\left| \int_0^{\infty} G_t'(-y) u_0(y) dy \right| \leq \frac{\| g' \|_{L^{\infty}} \| u_0 \|_{L^1}}{t^{1/2}},
\quad t > 0.
$$
Therefore, $\beta \in L^{\infty}_{\rm loc}(\mathbb{R}_+)$ and $\beta(t) \sim t^{1/4}$ as
$t \to \infty$ due to the second term in (\ref{beta}). Now, the integral
equation (\ref{gamma}) for $\gamma(t)$ with a weakly singular kernel is well defined
and solutions exist with $\gamma \in L^{\infty}_{\rm loc}(\mathbb{R}_+)$. 
Similarly, the solution $u \in L^{\infty}(\mathbb{R}_+ \times \mathbb{R}_+)$ 
is well defined by (\ref{laplace-solution}).
\end{proof}

\begin{remark}
One can show that there is no singularity of the solution for
$\beta(t)$ as $t \to 0$ so that $\beta(0) = u_0'(0)$ by continuity. Also, one can show
that the solution of the integral equation (\ref{gamma}) for $\gamma(t)$ exists in the closed form:
$\gamma(t) = 2 \int_0^{\infty} G_t(-y) u_0'''(y) dy$.
\end{remark}

Coming back to the original question, if $u_0(0) = 0$, $u_0'(0) < 0$, and $u''_0(0) = -\frac{1}{2}$,
then there is a finite value of $t_0 \in (0,\infty)$
such that $u_x |_{x = 0} > 0$ for all $t > t_0$, that is, $h(x,t)$ loses monotonicity
at $x = 0$ in a finite time $t_0$ (recall that $u = h_x$). This dynamical phenomenon
occurs because of the inhomogeneous boundary conditions $u_{xx} |_{x = 0} = -\frac{1}{2}$
even in the absence of the advection term in the fourth-order diffusion equation (\ref{toy}).

\subsection{Case $V_0 \neq 0$}

We have the solution in the implicit form (\ref{solution-implicit}) and we need
to derive integral equations on the unknown function $\beta(t)$ and $\gamma(t)$.
One integral equation follows again from the boundary condition $u(0,t) = 0$:
\begin{eqnarray}
\nonumber
& \phantom{t} & - \int_0^t \left[ G_{t-s}(V_0(t-s)) \gamma(s) + G''_{t-s}(V_0(t-s)) \beta(s) \right] ds \\
& = & \int_0^{\infty} G_t(V_0t - y) u_0(y) dy -  \frac{1}{2} \int_0^t G'_{t-s}(V_0(t-s)) ds.
\label{int-eq-1}
\end{eqnarray}
To find another integral equation from the boundary condition $u_x(0,t) = \beta(t)$,
we have to use the technique explained in Section 2.1 and to compute the derivative
of the solution (\ref{solution-implicit}) in $x$:
\begin{eqnarray}
\nonumber
u_x(x,t) & = & \int_0^{\infty} G_t'(x+V_0t - y) u_0(y) dy -  \frac{1}{2}
\int_0^t G''_{t-s}(x+V_0(t-s)) ds \\ \nonumber
& \phantom{t} & + \int_0^t G_{t-s}'(x+V_0(t-s)) \gamma(s) ds
+ \frac{1}{2} \beta(t) - \beta(0) H_t(x + V_0t) \\
& \phantom{t} & - \int_0^t H_{t-s}(x + V_0(t-s)) \beta'(s) ds
+ V_0 \int_0^t G_{t-s}(x + V_0(t-s)) \beta(s) ds.
\label{solution-implicit-derivative}
\end{eqnarray}
We can now use the boundary condition $u_x(0,t) = \beta(t)$
to obtain another integral equation for $\beta$ and $\gamma$:
\begin{eqnarray}
\nonumber
& \phantom{t} & \beta(t) + 2 \beta(0) H_t(V_0t) + 2
\int_0^t H_{t-s}(V_0 (t-s)) \beta'(s) ds \\
\nonumber
& \phantom{t} &
- 2 V_0 \int_0^t G_{t-s}(V_0(t-s)) \beta(s) ds - 2 \int_0^t G_{t-s}'(V_0(t-s)) \gamma(s) ds\\
& \phantom{t} & = 2 \int_0^{\infty} G_t'(V_0 t - y) u_0(y) dy -  \int_0^t G''_{t-s}(V_0(t-s)) ds.
\label{int-eq-2}
\end{eqnarray}
The system of integral equations (\ref{int-eq-1}) and (\ref{int-eq-2}) completes the solution
(\ref{solution-implicit}) for the case $V_0 \neq 0$. Because of 
the original motivation to study behavior for large negative $V(t)$ in (\ref{numerical-claim}), 
we shall analyze the obtained solution for $V_0 < 0$.

\begin{theorem}
Consider the advection--diffusion equation (\ref{toy}) for $V_0 < 0$ with the initial data
$u_0 \in L^1(\mathbb{R}_+)$. Then, there exists a solution $u \in L^{\infty}(\mathbb{R}_+ \times \mathbb{R}_+)$
of the evolution problem in the explicit form (\ref{solution-implicit}), where
$\beta, \gamma \in L^{\infty}(\mathbb{R}_+)$ are
defined by (\ref{int-eq-1}) and (\ref{int-eq-2}) with
\begin{equation}
\lim_{t \to \infty} \beta(t) = \frac{1}{2 |V_0|^{1/3}} \quad
\lim_{t \to \infty} \gamma(t) = \frac{|V_0|^{1/3}}{2}.
\label{limits-explicit}
\end{equation}
\label{theorem-2}
\end{theorem}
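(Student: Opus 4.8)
The plan is to keep the construction that precedes the theorem: $u$ is given by (\ref{solution-implicit}) with $\beta,\gamma$ the solutions of the Volterra system (\ref{int-eq-1})--(\ref{int-eq-2}), so the whole theorem reduces to analyzing that system. The feature that makes $V_0<0$ qualitatively different from $V_0=0$ is that the convolution kernels become globally integrable with exponential decay: writing $G_\tau(x)=\tau^{-1/4}g(x\tau^{-1/4})$ as in Appendix~A and using the large-argument asymptotics $g^{(j)}(z)=O\!\left(|z|^{(j-1)/3}e^{-\kappa|z|^{4/3}}\right)$, the kernels $\tau\mapsto G^{(j)}_\tau(V_0\tau)$ for $j=0,1,2$ behave like $\tau^{-(1+j)/4}$ near $\tau=0$ (integrable) and like $e^{-\kappa|V_0|^{4/3}\tau}$ at infinity, hence lie in $L^1(\R_+)$, while $\tau\mapsto H_\tau(V_0\tau)$ is bounded and tends exponentially to $-\tfrac12$. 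This is precisely what upgrades $\beta,\gamma\in L^\infty_{\rm loc}(\R_+)$ of Theorem~\ref{theorem-1} to $\beta,\gamma\in L^\infty(\R_+)$ here.

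First I would prove existence and global boundedness of $(\beta,\gamma)$. The singular $\beta'$-term in (\ref{int-eq-2}) is first reduced through $\int_0^t H_{t-s}(V_0(t-s))\beta'(s)\,ds=-\tfrac12\bigl(\beta(t)-\beta(0)\bigr)+\int_0^t\eta(t-s)\beta'(s)\,ds$ with $\eta(\tau):=H_\tau(V_0\tau)+\tfrac12$ exponentially small; then (\ref{int-eq-1})--(\ref{int-eq-2}) becomes a linear Volterra system of convolution type with kernels in an exponentially weighted $L^1$ space and bounded forcing (the $u_0$-integrals are $O(t^{-1/2})$ by Young's inequality, the rest are bounded convolutions of $L^1$ kernels). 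Global solvability with bounded, in fact exponentially convergent, solution then follows from the resolvent theory for convolution Volterra systems, once the Laplace symbol of the kernel matrix is invertible for $\Re\lambda\ge0$; equivalently, Laplace transforming (\ref{toy}) in $t$ reduces the problem to $\hat u_{xxxx}-V_0\hat u_x+\lambda\hat u=u_0$ on $\R_+$ with $\hat u(0)=0$, $\hat u_{xx}(0)=-\tfrac1{2\lambda}$ and decay, an ODE with exactly two characteristic roots of negative real part that is uniquely solvable and produces $\hat\beta(\lambda)=\hat u_x(0,\lambda)$, $\hat\gamma(\lambda)=\hat u_{xxx}(0,\lambda)$ in closed form. (The regularity $\beta\in C^1$ with $\beta'(t)\to0$ needed for the $\beta'$-term follows by a bootstrap off the bound on $\beta$ using the smoothing of $G$, or from mild extra smoothness of $u_0$.)

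For the limits (\ref{limits-explicit}) I would pass to the limit $t\to\infty$ in (\ref{int-eq-1})--(\ref{int-eq-2}): $\int_0^t G^{(j)}_{t-s}(V_0(t-s))\,ds\to I_j:=\int_0^\infty G^{(j)}_\tau(V_0\tau)\,d\tau$, the $u_0$-integrals vanish, and $\int_0^t\eta(t-s)\beta'(s)\,ds\to0$, so the system yields the two linear relations $\gamma_\infty I_0+\beta_\infty I_2=\tfrac12 I_1$ and $2|V_0|\beta_\infty I_0-2\gamma_\infty I_1=-I_2$. The constants are evaluated by a resolvent identity for $G$: $W(x):=\int_0^\infty G_t(x+V_0t)\,dt$ solves $W''''-V_0W'=\delta$ on $\R$ with $W(-\infty)=0$ and $W$ bounded at $+\infty$, hence $W(x)=Ae^{-|V_0|^{1/3}x}+B$ on $x>0$, and continuity of $W,W',W''$ with the unit jump of $W'''$ at $0$ give $B=\tfrac1{|V_0|}$, $A=-\tfrac1{3|V_0|}$, so $I_0=\tfrac{2}{3|V_0|}$, $I_1=\tfrac1{3|V_0|^{2/3}}$, $I_2=-\tfrac1{3|V_0|^{1/3}}$. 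The unique solution of the $2\times2$ system is then exactly $\beta_\infty=\tfrac1{2|V_0|^{1/3}}$, $\gamma_\infty=\tfrac{|V_0|^{1/3}}{2}$, i.e.\ the boundary data $u_\infty'(0),u_\infty'''(0)$ of the stationary profile $u_\infty(x)=\tfrac1{2|V_0|^{2/3}}\!\left(1-e^{-|V_0|^{1/3}x}\right)$ (equivalently, $\lim_{t\to\infty}\beta(t)=\lim_{\lambda\to0}\lambda\hat\beta(\lambda)$ by the final-value theorem, with $\lambda\hat u(\cdot,\lambda)\to u_\infty$ from the rescaled Laplace ODE). Writing $b=\beta-\beta_\infty$, $c=\gamma-\gamma_\infty$, the pair $(b,c)$ satisfies the same Volterra system with forcing tending to $0$, so $b,c\to0$ by the $L^1$ resolvent; and $u\in L^\infty(\R_+\times\R_+)$ follows from (\ref{solution-implicit}) with global bounds, its convolution terms being $L^1$ kernels against bounded $\beta,\gamma$ and the $u_0$-term being $O(t^{-1/2})$.

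The main obstacle is the passage from local to global control: showing the resolvent kernel of the Volterra system is integrable on all of $\R_+$, equivalently that the $2\times2$ Laplace symbol built from the boundary values of $e^{p_j(\lambda)x}$ stays bounded away from $0$ for $\Re\lambda\ge0$ (and up to $\Re\lambda\ge-\delta$, which is what produces the exponential rate), carried out together with the delicate handling of the non-integrable $G'''$-contribution through the $\beta'$-term and the justification that $\beta'(t)\to0$.
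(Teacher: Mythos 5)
Your proposal reaches the same two linear relations for $(\beta_\infty,\gamma_\infty)$ and the same limits as the paper, but by a genuinely different route in two places. First, the paper evaluates the constants $\int_0^\infty G^{(j)}_\tau(V_0\tau)\,d\tau$ by Fourier transform with a delicate principal-value/half-residue bookkeeping at $k=0$ (taking $\lim H_t(V_0t)=0$ and the $\beta$-convolution in (\ref{int-eq-2}) in the principal-value sense, $\frac{1}{6|V_0|}$, while the $\gamma$-convolution in (\ref{int-eq-1}) carries the half-residue, $\frac{2}{3|V_0|}$); you instead solve the ODE $W''''-V_0W'=\delta$ for $W(x)=\int_0^\infty G_t(x+V_0t)\,dt$, take $\lim H_t(V_0t)=-\frac12$, and use the full value $I_0=\frac{2}{3|V_0|}$ in both equations. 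I checked that your constants $I_0,I_1,I_2$ and your treatment of the $H$-terms are mutually consistent and reproduce exactly (\ref{lin-eq-1-infty}) and (\ref{lin-eq-2-infty}); the two bookkeepings differ in intermediate steps but the $k=0$ ambiguity shifts constants between the $H$-terms and the $G$-convolution in a compensating way. Your ODE computation is cleaner and buys a structural insight the paper lacks: the limits are the boundary data of the stationary profile $u_\infty(x)=\frac{1}{2|V_0|^{2/3}}(1-e^{-|V_0|^{1/3}x})$. Second, on the analytic side you propose more than the paper delivers: the paper only establishes $\beta,\gamma\in L^\infty_{\rm loc}$ and then computes the limits \emph{conditionally on their existence} (``with the assumed limit in (\ref{limits})''), whereas you outline a Volterra resolvent argument with exponentially decaying $L^1$ kernels that would yield global boundedness and actual convergence. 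That is the right mechanism for why $V_0<0$ differs from $V_0=0$, but the decisive step --- invertibility of the $2\times2$ Laplace symbol for $\Re\lambda\ge0$ --- is flagged as an obstacle rather than verified, so your argument, like the paper's, does not yet fully prove that the limits exist; it is, however, a strictly more honest framing of what remains to be done.
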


\begin{proof}
Similarly to the proof of Theorem \ref{theorem-1}, it is easy to show 
from the integral equations (\ref{int-eq-1}) and (\ref{int-eq-2}) 
that if $u_0 \in L^1(\mathbb{R}_+)$, then $\beta, \gamma \in L^{\infty}_{\rm loc}(\mathbb{R}_+)$.
We shall now compute the limit of $\beta(t)$ and $\gamma(t)$ as $t \to \infty$:
\begin{equation}
\label{limits}
\beta_{\infty} := \lim_{t \to \infty} \beta(t), \quad \gamma_{\infty} := \lim_{t \to \infty} \gamma(t).
\end{equation}
To deal with the first integral equation (\ref{int-eq-1}), we first notice the explicit computation
by using the Fourier transform:
\begin{eqnarray*}
f(t) & := & \int_0^t G'_{t-s}(V_0(t-s)) ds \\
& = & \frac{1}{2 \pi}  \int_{-\infty}^{\infty} (ik) \left( \int_0^t e^{-s(k^4 - i k V_0)} ds \right) dk \\
& = & \frac{1}{2 \pi}  \int_{-\infty}^{\infty} \frac{i(1 - e^{-t(k^4 - i k V_0)})}{k^3 - i V_0} dk,
\end{eqnarray*}
where the integrals in $s$ and $k$ can be interchanged by Fubini's Theorem
and the integration is performed in the principal value sense.
We can now explicitly compute the limit as $t \to \infty$ by using
Lebesgue's Dominated Convergence Theorem:
\begin{eqnarray*}
\lim_{t \to \infty} f(t) = \frac{1}{2 \pi}
\int_{-\infty}^{\infty} \frac{i}{k^3 - i V_0} dk =
\frac{-V_0}{\pi}  \int_0^{\infty} \frac{dk}{k^6 + V_0^2} = \frac{1}{3 |V_0|^{2/3}}.
\end{eqnarray*}
This computation gives the last term of the integral equation (\ref{int-eq-1}) as $t \to \infty$.
To deal with the first term on the right-hand side of (\ref{int-eq-1}), we write
\begin{eqnarray*}
\int_0^{\infty} G_t(V_0t - y) u_0(y) dy & = & \frac{1}{2\pi}
\int_{-\infty}^{\infty} \left( \int_0^{\infty} e^{-t(k^4 - i k V_0) - iky} u_0(y) dy \right) dk \\
& = & \int_{-\infty} e^{-t(k^4 - i k V_0)} \hat{u}_0(k) dk,
\end{eqnarray*}
where
$$
\hat{u}_0(k) := \frac{1}{2\pi} \int_0^{\infty} e^{-iky} u_0(y) dy.
$$
By Lebesgue's Dominated Convergence Theorem, this integral converges to zero as $t \to \infty$
as long as $u_0 \in L^1(\R_+)$.

To deal with the second term on the left-hand side of the integral equation (\ref{int-eq-1}), we
rewrite it in the form
$$
\int_0^t G''_{t-s}(V_0(t-s)) \beta(s) ds =
\frac{1}{2 \pi}  \int_{-\infty}^{\infty} (ik)^2 \left( \int_0^t \beta(t-s) e^{-s(k^4 - i k V_0)} ds \right) dk.
$$
Since $\beta \in L_{\rm loc}^{\infty}(\R_+)$ with the assumed limit in (\ref{limits}), we apply
Lebesgue's Dominated Convergence Theorem and compute the integrals
in the principal value sense:
$$
\lim_{t \to \infty} \int_0^t G''_{t-s}(V_0(t-s)) \beta(s) ds =
\frac{-\beta_{\infty}}{2 \pi}  \int_{-\infty}^{\infty} \frac{k}{k^3 - i V_0} dk =
\frac{-\beta_{\infty}}{\pi}  \int_0^{\infty} \frac{k^4 dk}{k^6 + V_0^2} =
\frac{-\beta_{\infty}}{3 |V_0|^{1/3}}.
$$

The first term on the left-hand side of the integral equation (\ref{int-eq-1})
is more tricky. First, we rewrite it in the form,
$$
\int_0^t G_{t-s}(V_0(t-s)) \gamma(s) ds =
\frac{1}{2 \pi}  \int_{-\infty}^{\infty} \left( \int_0^t \gamma(t-s) e^{-s(k^4 - i k V_0)} ds \right) dk.
$$
However, if $\gamma \in L_{\rm loc}^{\infty}(\R_+)$ with the limit in (\ref{limits}),
application of Lebesgue's Dominated Convergence Theorem yields the integral in $k$
with a simple pole at $k = 0$:
$$
\lim_{t \to \infty} \int_0^t G_{t-s}(V_0(t-s)) \gamma(s) ds =
\frac{\gamma_{\infty}}{2 \pi}  \int_{-\infty}^{\infty} \frac{dk}{k (k^3 - i V_0)}.
$$
The integral is no longer understood in the principal value sense. Instead, we return back
to the treatment of the inverse Laplace transform in (\ref{toy-solution-inverse}) with ${\rm Re}(c) > 0$,  
use transformation $o = i k$, and shift the contour of integration in $k$ 
below the pole at $k = 0$. As a result, computations
are completed with the half-residue term at the simple pole and the principal value integral:
\begin{eqnarray*}
\lim_{t \to \infty} \int_0^t G_{t-s}(V_0(t-s)) \gamma(s) ds =
\frac{\gamma_{\infty}}{2 \pi} \left( \frac{\pi}{|V_0|} +
\int_{-\infty}^{\infty} \frac{k^2 dk}{k^6 + V_0^2}  \right) = \frac{2 \gamma_{\infty}}{3 |V_0|}.
\end{eqnarray*}
Combining all computations together, we have obtained the following linear equation
on $\beta_{\infty}$ and $\gamma_{\infty}$ from the integral equation (\ref{int-eq-1}):
\begin{equation}
\label{lin-eq-1-infty}
\frac{2 \gamma_{\infty}}{|V_0|} - \frac{\beta_{\infty}}{|V_0|^{1/3}} = \frac{1}{2 |V_0|^{2/3}}.
\end{equation}

To deal with the second integral equation (\ref{int-eq-2}),
we use the Fourier transform again to write
$$
H_t(V_0 t) = \frac{1}{2\pi i} \int_{-\infty}^{+\infty}
\frac{e^{- t (k^4 - i k V_0)}}{k} d k
$$
and
\begin{eqnarray*}
\int_0^t H_{t-s}(V_0(t-s)) \beta'(s) ds =
\frac{1}{2 \pi i} \int_{-\infty}^{\infty} \frac{1}{k}
\left( \int_0^t \beta'(t-s) e^{-s(k^4-i k V_0)} ds \right) dk,
\end{eqnarray*}
where the integrals are understood in the principal value sense.
If $\beta, \gamma \in L_{\rm loc}^{\infty}$ with the limits (\ref{limits}),
the Lebesgue's Dominated Convergence Theorem implies that
$$
H_t(V_0 t), \int_0^t H_{t-s}(V_0(t-s)) \beta'(s) ds \to 0 \quad \mbox{\rm as} \quad t \to \infty.
$$
Similar to the previous computations, we prove that
\begin{eqnarray*}
\int_0^{\infty} G_t'(V_0t - y) u_0(y) dy \to 0 \quad \mbox{\rm as} \quad t \to \infty, \\
\lim_{t \to \infty} \int_0^t G''_{t-s}(V_0(t-s)) ds = \frac{-1}{3 |V_0|^{1/3}}, \\
\lim_{t \to \infty} \int_0^t G_{t-s}'(V_0(t-s)) \gamma(s) ds =
\frac{\gamma_{\infty}}{3 |V_0|^{2/3}}, 
\end{eqnarray*}
and
\begin{eqnarray*}
\lim_{t \to \infty} \int_0^t G_{t-s}(V_0(t-s)) \beta(s) ds =
\frac{\beta_{\infty}}{2 \pi}  \int_{-\infty}^{\infty} \frac{dk}{k (k^3 - i V_0)}
= \frac{\beta_{\infty}}{6 |V_0|}.
\end{eqnarray*}
where the last integral is computed in the principal value sense because 
equations (\ref{solution-implicit-derivative}) and (\ref{int-eq-2}) 
are derived in the principal value sense. 

Combining all computations together, we have obtained the following linear equation
on $\beta_{\infty}$ and $\gamma_{\infty}$ from the integral equation (\ref{int-eq-2}):
\begin{equation}
\label{lin-eq-2-infty}
4 \beta_{\infty} - \frac{2 \gamma_{\infty}}{|V_0|^{2/3}} = \frac{1}{|V_0|^{1/3}}.
\end{equation}
Solving the linear system (\ref{lin-eq-1-infty}) and (\ref{lin-eq-2-infty}), we obtain
(\ref{limits-explicit}) and the theorem is proved.
\end{proof}

Coming back to the original question, if $u_0(0) = 0$, $u_0'(0) < 0$, and $u''_0 = -\frac{1}{2}$,
then there is a finite value of $t_0 \in (0,\infty)$
such that $u_x |_{x = 0} > 0$ for all $t > t_0$. Therefore,
like in the case $V_0 = 0$, the function $h(x,t)$ loses monotonicity
at $x = 0$ in a finite time $t_0$ (where $u = h_x$) with the only
difference that $u_x |_{x = 0}$ remains finite and positive as $t \to \infty$.
We conclude that the presence of the advection term with $V_0 < 0$
in the fourth-order diffusion equation (\ref{toy})
does not prevent the loss of monotonicity in $x$ but
still stabilizes the solution globally as $t \to \infty$.
In both cases $V_0 = 0$ and $V_0 < 0$, the monotonicity of
$h$ in $x$ is lost because of the inhomogeneous boundary conditions
$h_{xx} |_{x = 0} = -\frac{1}{2}$.

\section{Solution of the original problem}

We shall now use Laplace transform (\ref{Laplace-x}) to
obtain the implicit solution to the advection-diffusion equation (\ref{pde-u})
with a variable speed $V(t)$. Let us denote
$$
W(t) = \int_0^t V(s) ds
$$
and obtain the Laplace transform solution in the form:
\begin{eqnarray}
\nonumber
U(p,t) & = & U_0(p) e^{-t p^4 + W(t) p} \\
& \phantom{t} & + \int_0^t e^{-(t-s) p^4 + (W(t)-W(s)) p}
\left( - \frac{1}{2}p  + p^2 \beta(s) \right) ds.
\label{pde-solution}
\end{eqnarray}
Compared with the solution (\ref{toy-solution}), we have set $\gamma(t) = 0$
because of the third boundary condition in (\ref{bc}).
Using the inverse Laplace transform in $x$ and recalling the definition of the
Green's function $G_t(x)$ (see Appendix A), we obtain the analogue of the implicit solution
(\ref{solution-implicit}):
\begin{eqnarray}
\nonumber
u(x,t) & = & \int_0^{\infty} G_t(x+ W(t) - y) u_0(y) dy - \frac{1}{2}
\int_0^t G'_{t-s}(x+W(t) - W(s)) ds \\
& \phantom{t} & + \int_0^t G''_{t-s}(x+W(t) - W(s)) \beta(s) ds.
\label{solution-implicit-pde}
\end{eqnarray}
Now we have two unknowns $\beta$ and $W$ and we can set up
two integral equations at the boundary conditions
$u(0,t) = 0$ and $u_x(0,t) = \beta(t)$.

From the boundary
condition $u(0,t) = 0$, we obtain the integral equation:
\begin{eqnarray}
\nonumber
-\int_0^t G''_{t-s}(W(t)-W(s)) \beta(s) ds & = &
\int_0^{\infty} G_t(W(t) - y) u_0(y) dy \\
& \phantom{t} & -  \frac{1}{2} \int_0^t G'_{t-s}(W(t)-W(s)) ds.
\label{integral-equation-pde-1}
\end{eqnarray}

To find another integral equation from the boundary condition $u_x(0,t) = \beta(t)$,
we differentiate the solution (\ref{solution-implicit-pde}) in $x$:
\begin{eqnarray}
\nonumber
u_x(x,t) & = & \int_0^{\infty} G_t'(x+ W(t) - y) u_0(y) dy -  \frac{1}{2}
\int_0^t G''_{t-s}(x+W(t)-W(s)) ds \\
\nonumber & \phantom{t} &
+ \frac{1}{2} \beta(t) - \beta(0) H_t(x + W(t))
- \int_0^t H_{t-s}(x + W(t) - W(s)) \beta'(s) ds \\
& \phantom{t} & + V(t) \int_0^t G_{t-s}(x + W(t)- W(s)) \beta(s) ds.
\label{solution-implicit-derivative-pde}
\end{eqnarray}
From the boundary condition $u_x(0,t) = \beta(t)$,
we obtain another integral equation:
\begin{eqnarray}
\nonumber
& \phantom{t} & \beta(t) + 2 \beta(0) H_t(W(t)) + 2
\int_0^t H_{t-s}(W(t) - W(s)) \beta'(s) ds \\
\nonumber
& \phantom{t} &
- 2 V(t) \int_0^t G_{t-s}(W(t)-W(s)) \beta(s) ds \\
& \phantom{t} & = 2 \int_0^{\infty} G_t'(W(t) - y) u_0(y) dy -  \int_0^t G''_{t-s}(W(t)-W(s)) ds.
\label{integral-equation-pde-2}
\end{eqnarray}

We shall prove that the system of two integral equations
(\ref{integral-equation-pde-1}) and (\ref{integral-equation-pde-2})
determines uniquely the function $\beta(t)$ and $V(t)$ locally for $t > 0$.
The following theorem gives the result.

\begin{theorem}
Assume that $u_0 \in C^{\infty}(\R_+)$ such that
\begin{equation}
\label{initial-value-u0}
u_0(0) = 0, \quad u_0''(0) = -\frac{1}{2}, \quad
u_0'''(0) = 0.
\end{equation}
Then, there exists a formal solution $(V,\beta)$ of the system of two integral equations
(\ref{integral-equation-pde-1}) and (\ref{integral-equation-pde-2})
in the form of the fractional power series:
\begin{equation}
\label{series-fractional}
\beta(t) = \beta_0 + \sum_{n=4}^{\infty} \beta_{n/4} t^{n/4}, \quad
V(t) = V_0 + \sum_{n=1}^{\infty} V_{n/4} t^{n/4},
\end{equation}
where $\beta_0 = u_0'(0)$, $V_0 = u_0^{(4)}(0)/u_0'(0)$, and $\{ \beta_{n/4}, V_{(n-3)/4} \}_{n = 4}^{\infty}$
are uniquely determined.
\end{theorem}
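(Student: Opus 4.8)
\medskip
\noindent\textbf{Proposed approach.} The plan is to feed the fractional power series~(\ref{series-fractional}) into the two integral equations~(\ref{integral-equation-pde-1}) and~(\ref{integral-equation-pde-2}), to reduce each of them to an identity of formal power series in $t^{1/4}$, and to determine the unknown coefficients by matching powers of $t^{1/4}$. The basic tool is the self-similar form of the kernels recorded in Appendix~A: write $G_t^{(k)}(x) = t^{-(k+1)/4}\,g^{(k)}\!\bke{x t^{-1/4}}$ and $H_t(x) = h\!\bke{x t^{-1/4}}$ with $h(z) = \int_0^z g(w)\,dw$, where $g$ is smooth, even, and decays with all its derivatives fast enough that the moments $\int_{-\infty}^0 w^j g^{(k)}(w)\,dw$ are all finite; note in particular $g'(0)=g'''(0)=0$, while $g(0)=\Gamma(\tfrac14)/(4\pi)>0$ and $g''(0)=-\Gamma(\tfrac34)/(4\pi)\ne 0$. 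Integrating the ansatz yields $W(t)=V_0 t + \sum_{n\ge 1}\tfrac{4V_{n/4}}{n+4}\,t^{(n+4)/4}$, so $W(t)-W(s)=V_0(t-s)+O(t^{5/4})$ and, after the rescaling $s=t\tau$ in the convolutions, every shifted argument appearing in (\ref{integral-equation-pde-1})--(\ref{integral-equation-pde-2}) is $O(t^{3/4})$ once divided by the pertinent power of $t^{1/4}$.

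\medskip
\noindent\textbf{Reduction to formal series.} In each convolution I would substitute $s=t\tau$ with $\tau\in(0,1)$, and in each integral against $u_0$ the substitution $y=W(t)-t^{1/4}w$; then Taylor-expand the profiles $g^{(k)}$, $h$ about $0$ and the datum $u_0$ about $0$ (here $u_0\in C^\infty(\R_+)$ is used). Each term of (\ref{integral-equation-pde-1}) and (\ref{integral-equation-pde-2}) then becomes a formal series in $t^{1/4}$ whose coefficients are finite linear combinations of the moments above and of Beta integrals $\int_0^1(1-\tau)^{-a}\tau^{b}\,d\tau$ with $a\in\{0,\tfrac14,\tfrac12,\tfrac34\}$, all $<1$ since the $H_t$-trick of Section~2.1 has already removed the non-integrable $G'''$ singularity. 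Matching the lowest powers, one expects: (\ref{integral-equation-pde-2}) at order $t^0$ to force $\beta_0=u_0'(0)$; its orders $t^{1/4},t^{1/2}$ to be identities (using $u_0''(0)=-\tfrac12$ and moment relations such as $\int_0^\infty w\,g\,dw=\Gamma(\tfrac34)/\pi$); its order $t^{3/4}$ to force $V_0=u_0^{(4)}(0)/u_0'(0)$ (using $u_0'''(0)=0$ and $\int_0^\infty w^3 g\,dw=-8\,g(0)$); and (\ref{integral-equation-pde-1}) at all orders $\le t$ to be identically satisfied. In particular $\beta_{1/4}=\beta_{1/2}=\beta_{3/4}=0$, which is why the $\beta$-series starts at $t$.

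\medskip
\noindent\textbf{The recursion.} The inductive step is: fix $N\ge 4$, assume $\beta_0,V_0$ and $\{\beta_{n/4},V_{(n-3)/4}\}_{n=4}^{N-1}$ known, and determine $(\beta_{N/4},V_{(N-3)/4})$. A power count --- bearing in mind the four distinct self-similar scalings of $G,G',G'',H$ and that $V$ enters only through $W$ --- shows that $\beta_{N/4}$ and $V_{(N-3)/4}$ both appear for the first time in (\ref{integral-equation-pde-2}) at order $t^{N/4}$ and in (\ref{integral-equation-pde-1}) at order $t^{(N+1)/4}$, and at those orders one reads off a $2\times 2$ linear system for them. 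In (\ref{integral-equation-pde-2}) at order $t^{N/4}$ the explicit term $\beta(t)$ gives $\beta_{N/4}$ coefficient $1$, while $-2V(t)\int_0^t G_{t-s}(\cdots)\beta(s)\,ds$ and $2\beta(0)H_t(W(t))$ together give $V_{(N-3)/4}$ coefficient $-\tfrac{8g(0)\beta_0(N-2)}{3(N+1)}$. In (\ref{integral-equation-pde-1}) at order $t^{(N+1)/4}$ the term $-\int_0^t G''_{t-s}(\cdots)\beta(s)\,ds$ gives $\beta_{N/4}$ coefficient $-g''(0)\,B\!\bke{\tfrac14,\tfrac{N+4}{4}}$, and $\int_0^\infty G_t(W(t)-y)u_0(y)\,dy$ gives $V_{(N-3)/4}$ coefficient $\tfrac{2\beta_0}{N+1}$ (the contribution of the moving endpoint $w=W(t)t^{-1/4}$ of the $w$-integral vanishes because there the integrand equals $g(w)\,u_0(0)=0$); everything else at these two orders is built from the already-known data. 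The determinant of the resulting system is
\[
-\,\frac{\beta_0}{N+1}\,\bkt{\,2+\frac{8}{3}\,g(0)\,\bigl|g''(0)\bigr|\,(N-2)\,B\!\Bigl(\tfrac14,\tfrac{N+4}{4}\Bigr)},
\]
which is nonzero for every $N\ge 4$, since $\beta_0=u_0'(0)\ne 0$ and the bracket is a sum of positive quantities. Hence $(\beta_{N/4},V_{(N-3)/4})$ is uniquely determined, and induction on $N$ would complete the construction of the series (\ref{series-fractional}).

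\medskip
\noindent The delicate part of this programme is not the $2\times 2$ systems --- their sign structure makes non-degeneracy immediate --- but the power-counting bookkeeping that produces their entries: one must track exactly which Taylor and moment contributions of the building blocks $\int_0^t G^{(k)}_{t-s}(W(t)-W(s))f(s)\,ds$, $\int_0^t H_{t-s}(W(t)-W(s))f(s)\,ds$ and $\int_0^\infty G^{(k)}_t(W(t)-y)u_0(y)\,dy$ fall at a prescribed power of $t^{1/4}$. This is complicated by the mismatch of the four scalings and by the quadratic dependence of the kernels on $W$; it is precisely because $g'''(0)=0$ that the $V_{(N-3)/4}$-linear part of (\ref{integral-equation-pde-1}) is of order $t^{(N+1)/4}$ rather than lower, which is what makes the two unknowns appear at matching orders. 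A clean way to organize everything is to compute, once and for all, the formal $t^{1/4}$-expansions of those building blocks with coefficients written as explicit functionals of lower-order data, and then push the ansatz through them.
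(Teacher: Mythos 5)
Your overall strategy is exactly the paper's: substitute the fractional power series into the two integral equations, rescale each convolution to expose the self-similar profile $g$, Taylor-expand everything in powers of $t^{1/4}$, and solve a $2\times 2$ linear system for $(\beta_{n/4},V_{(n-3)/4})$ at each order. The low-order matching ($\beta_0=u_0'(0)$, $V_0=u_0^{(4)}(0)/u_0'(0)$, the identities at orders $t^{1/4},t^{1/2},t^{3/4}$) agrees with the paper. The gap is in the one step you describe as ``immediate'': the non-degeneracy of the $2\times 2$ system. Your determinant comes out as a sum of two same-sign terms and hence trivially nonzero; the paper's determinant is proportional to $1-C_n$ with
$$
C_n=-\tfrac{4(n+1)}{3}\,g(0)\,g''(0)\int_0^1\frac{(1-x)^{n/4}}{x^{3/4}}\,dx
=\frac{n+1}{6\sqrt{2}\,\pi}\,\frac{\Gamma\!\left(\tfrac{n+4}{4}\right)\Gamma\!\left(\tfrac14\right)}{\Gamma\!\left(\tfrac{n+5}{4}\right)},
$$
a positive, monotonically increasing sequence that passes \emph{between} $C_8\approx 0.96$ and $C_9\approx 1.04$. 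Non-degeneracy is therefore a genuine near-miss that the paper has to verify (numerically), not a consequence of sign structure. Your claim that the bracket is ``a sum of positive quantities'' cannot be reconciled with this.

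The bookkeeping error is traceable. In equation (\ref{integral-equation-pde-2}) at order $t^{N/4}$, the $V_{(N-3)/4}$-linear contributions come from three places, not two: besides $2\beta(0)H_t(W(t))$ (giving $+\tfrac{8}{N+1}\beta_0 g(0)$) and $-2V(t)\int_0^tG_{t-s}\beta\,ds$ (giving $-\tfrac83\beta_0 g(0)$), the right-hand side term $2\int_0^\infty G_t'(W(t)-y)u_0(y)\,dy$ also carries a $V$-linear piece through the shift $W(t)$ in its argument, namely $2u_0'(0)(-a_t)\int_0^\infty g'(z)\,dz=+\tfrac{8}{N+1}\beta_0 g(0)V_{(N-3)/4}t^{N/4}$ on the right. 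This cancels the $H_t$ contribution exactly, so the net coefficient is $-\tfrac83\beta_0 g(0)$, not your $-\tfrac{8(N-2)}{3(N+1)}\beta_0 g(0)$ (check against the paper's explicit $t^{4/4}$ equation, where $\tfrac85-\tfrac83-\tfrac85=-\tfrac83$). Likewise in equation (\ref{integral-equation-pde-1}) the $V$-coefficient arises from $\int_0^\infty g'(z)z\,dz=-\tfrac12$, and keeping track of which side of the equation it sits on flips the sign you assigned. With the corrected entries the determinant becomes $\tfrac{2\beta_0}{N+1}(1-C_N)$, a \emph{difference} of positive quantities, and the heart of the proof is precisely the verification that $C_N\neq 1$ for all $N$ --- the step your argument skips.
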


\begin{proof}
We substitute the series representations (\ref{series-fractional}) to each term of the integral equations
(\ref{integral-equation-pde-1}) and (\ref{integral-equation-pde-2}). It follows
from (\ref{series-fractional}) that
$$
a_t = \frac{1}{t^{1/4}} \int_0^t V(s) ds = V_0 t^{3/4} + \sum_{n=1}^{\infty} \frac{4}{n+4} V_{n/4} t^{(n+3)/4}
$$
and
$$
\xi_{t,\tau} = \frac{1}{\tau^{1/4}} \int_{t-\tau}^t V(s) ds = V_0 \tau^{3/4}
+ \sum_{n=1}^{\infty} \frac{4}{n+4} V_{n/4} \frac{t^{(n+4)/4}-(t-\tau)^{(n+4)/4}}{\tau^{1/4}}.
$$

Using the representation (\ref{green-self-similar}) of the Green function with $g \in C^{\infty}(\R)$,
we obtain for the three terms of the integral equation (\ref{integral-equation-pde-1}):
\begin{eqnarray*}
& \phantom{t} & \int_0^t G''_{t-s}(W(t)-W(s)) \beta(s) ds = \beta_0 \int_0^t \frac{g''(\xi_{t,\tau})}{\tau^{3/4}} d \tau +
\sum_{n=4}^{\infty} \beta_{n/4} \int_0^t \frac{g''(\xi_{t,\tau}) (t-\tau)^{n/4}}{\tau^{3/4}} d \tau \\
& = & 4 \beta_0 g''(0) t^{1/4} + \sum_{k=2}^{\infty} \frac{1}{k!} g^{(k+2)}(0) \int_0^t \frac{\xi^k_{t,\tau}}{\tau^{3/4}} d \tau +
\sum_{n=4}^{\infty} \beta_{n/4} \int_0^t \frac{g''(\xi_{t,\tau}) (t-\tau)^{n/4}}{\tau^{3/4}} d \tau,
\end{eqnarray*}
\begin{eqnarray*}
& \phantom{t} & \int_0^{\infty} G_t(W(t) - y) u_0(y) dy = \int_0^{\infty} g(z-a_t) u_0(t^{1/4} z) dz =
\sum_{n=1}^{\infty} \frac{1}{n!} u_0^{(n)}(0) t^{n/4} \int_0^{\infty} g(z-a_t) z^n dz \\
& = & t^{1/4} u_0'(0) \sum_{k=0}^{\infty} \frac{1}{k!} (-a_t)^k \int_0^{\infty} g^{(k)}(z) z dz
+ \sum_{n=2}^{\infty} \frac{1}{n!} u_0^{(n)}(0) t^{n/4} \int_0^{\infty} g(z-a_t) z^n dz,
\end{eqnarray*}
and
\begin{eqnarray*}
\int_0^t G'_{t-s}(W(t)-W(s)) ds = \int_0^t \frac{g'(\xi_{t,\tau})}{\tau^{2/4}} d \tau =
\sum_{k=1}^{\infty} \frac{1}{k!} g^{(k+1)}(0) \int_0^t \frac{\xi^k_{t,\tau}}{\tau^{2/4}} d \tau.
\end{eqnarray*}

At the first powers of $t^{1/4}$, we obtain a system of linear algebraic equations on
the coefficients of the power series (\ref{series-fractional}):
\begin{eqnarray*}
t^{1/4} : & \phantom{t} & \quad -4 \beta_0 g''(0) = u_0'(0) \int_0^{\infty} g(z) z dz, \\
t^{2/4} : & \phantom{t} & \quad 0 = \frac{1}{2!} u_0''(0) \int_0^{\infty} g(z) z^2 dz, \\
t^{3/4} : & \phantom{t} & \quad 0 = \frac{1}{3!} u_0'''(0) \int_0^{\infty} g(z) z^3 dz, \\
t^{4/4} : & \phantom{t} & \quad 0 = \frac{1}{4!} u_0^{(4)}(0) \int_0^{\infty} g(z) z^4 dz - u_0'(0) V_0 \int_0^{\infty} g'(z) z dz, \\
t^{5/4} : & \phantom{t} & \quad -\beta_{4/4} g''(0) \int_0^1 \frac{(1-x)^{4/4}}{x^{3/4}} dx =
\frac{1}{5!} u_0^{(5)}(0) \int_0^{\infty} g(z) z^5 dz - \frac{1}{2} u_0''(0) V_0 \int_0^{\infty} g'(z) z^2 dz \\
& \phantom{t} & \quad \quad \quad \quad \quad \quad \quad \quad \quad \quad \quad \quad
- \frac{4}{5} u_0'(0) V_{1/4} \int_0^{\infty} g'(z) z dz -\frac{2}{5} g''(0) V_0,
\end{eqnarray*}
and so on. Using the explicit values for the integrals (\ref{integral1})--(\ref{integral5})
and the initial conditions (\ref{initial-value-u0}), we obtain $\beta_0 = u_0'(0)$,
$V_0 = u_0^{(4)}(0)/u_0'(0)$, and the linear equation
\begin{equation}
\label{lin-eq-1}
u_0'(0) V_{1/4} + 8 g''(0) \left(\beta_{4/4} + u_0^{(5)}(0) + \frac{1}{2} V_0 \right) = 0
\end{equation}

Similarly, we work with the terms of the second integral equation (\ref{integral-equation-pde-2}):
\begin{eqnarray*}
& \phantom{t} & \int_0^t G_{t-s}(W(t)-W(s)) \beta(s) ds = \beta_0 \int_0^t \frac{g(\xi_{t,\tau})}{\tau^{1/4}} d \tau +
\sum_{n=4}^{\infty} \beta_{n/4} \int_0^t \frac{g(\xi_{t,\tau}) (t-\tau)^{n/4}}{\tau^{1/4}} d \tau \\
& = & \frac{4}{3} \beta_0 g(0) t^{3/4} + \sum_{k=2}^{\infty} \frac{1}{k!} g^{(k)}(0) \int_0^t \frac{\xi^k_{t,\tau}}{\tau^{1/4}} d \tau +
\sum_{n=4}^{\infty} \beta_{n/4} \int_0^t \frac{g(\xi_{t,\tau}) (t-\tau)^{n/4}}{\tau^{1/4}} d \tau,
\end{eqnarray*}
\begin{eqnarray*}
& \phantom{t} & \int_0^{\infty} G_t'(W(t) - y) u_0(y) dy = \int_0^{\infty} g(z-a_t) u'_0(t^{1/4} z) dz =
\sum_{n=0}^{\infty} \frac{1}{n!} u_0^{(n+1)}(0) t^{n/4} \int_0^{\infty} g(z-a_t) z^n dz \\
& = & u_0'(0) \sum_{k=0}^{\infty} \frac{1}{k!} (-a_t)^k \int_0^{\infty} g^{(k)}(z) dz
+ \sum_{n=1}^{\infty} \frac{1}{n!} u_0^{(n+1)}(0) t^{n/4} \int_0^{\infty} g(z-a_t) z^n dz,
\end{eqnarray*}
\begin{eqnarray*}
\int_0^t G''_{t-s}(W(t)-W(s)) ds = \int_0^t \frac{g''(\xi_{t,\tau})}{\tau^{3/4}} d \tau =
\sum_{k=0}^{\infty} \frac{1}{k!} g^{(k+2)}(0) \int_0^t \frac{\xi^k_{t,\tau}}{\tau^{3/4}} d \tau,
\end{eqnarray*}
\begin{eqnarray*}
H_t(W(t)) = \int_0^{a_t} g(z) dz =
\sum_{k=0}^{\infty} \frac{1}{(k+1)!} g^{(k)}(0) a_t^{k+1},
\end{eqnarray*}
and
\begin{eqnarray*}
\int_0^t H_{t-s}(W(t) - W(s)) \beta'(s) ds = \sum_{n=4}^{\infty} \beta_{n/4}
\sum_{k=0}^{\infty} \frac{1}{(k+1)!} g^{(k)}(0)
\int_0^t \xi_{t,\tau}^{k+1} (t-\tau)^{(n-4)/4} d \tau
\end{eqnarray*}

At the first powers of $t^{1/4}$, we obtain a system of linear algebraic equations on
the coefficients of the power series (\ref{series-fractional}):
\begin{eqnarray*}
t^{0/4} : & \phantom{t} & \quad \beta_0 = 2 u_0'(0) \int_0^{\infty} g(z) dz, \\
t^{1/4} : & \phantom{t} & \quad 0 = 2 u_0''(0) \int_0^{\infty} g(z) z dz - 4g''(0), \\
t^{2/4} : & \phantom{t} & \quad 0 = u_0'''(0) \int_0^{\infty} g(z) z^2 dz, \\
t^{3/4} : & \phantom{t} & \quad 2 \beta_0 g(0) V_0 -\frac{8}{3} \beta_0 g(0) V_0 =
\frac{1}{3} u_0^{(4)}(0) \int_0^{\infty} g(z) z^3 dz - 2 u_0'(0) V_0 \int_0^{\infty} g'(z) dz, \\
t^{4/4} : & \phantom{t} & \quad \beta_{4/4} + \frac{8}{5} \beta_0 g(0) V_{1/4} - \frac{8}{3} \beta_0 g(0) V_{1/4} =
\frac{1}{12} u_0^{(5)}(0) \int_0^{\infty} g(z) z^4 dz - 2 u_0''(0) V_0 \int_0^{\infty} g'(z) z dz \\
& \phantom{t} & \quad \quad \quad \quad \quad \quad \quad \quad \quad \quad \quad \quad
- \frac{8}{5} u_0'(0) V_{1/4} \int_0^{\infty} g'(z) dz,
\end{eqnarray*}
and so on. Again, using the explicit values for the integrals (\ref{integral1})--(\ref{integral5})
and the initial conditions (\ref{initial-value-u0}), we obtain $\beta_0 = u_0'(0)$,
$V_0 = u_0^{(4)}(0)/u_0'(0)$, and the linear equation
\begin{equation}
\label{lin-eq-2}
-\frac{8}{3} u_0'(0) g(0) V_{1/4} + \left(\beta_{4/4} + u_0^{(5)}(0) + \frac{1}{2} V_0 \right) = 0
\end{equation}
The system of linear equations (\ref{lin-eq-1}) and (\ref{lin-eq-2}) has a unique solution
\begin{equation}
\label{lin-eq-sol}
V_{1/4} = 0, \quad \beta_{4/4} = - u_0^{(5)}(0) - \frac{1}{2} V_0,
\end{equation}
provided that
$$
-\frac{64}{3} g(0) g''(0) = \frac{4}{3 \pi^2} \Gamma\left(\frac{1}{4}\right)\Gamma\left(\frac{3}{4}\right)
= \frac{4 \sqrt{2}}{3 \pi} \neq 1,
$$
which is confirmed. Note that the
constraint $V_0 = u_0^{(4)}(0)/u_0'(0)$ also follows from the 
contact equation (\ref{contact-equation}) obtained 
for sufficiently smooth solutions. Similarly,
the second equation (\ref{lin-eq-sol}) follows from the advection--diffusion 
equation (\ref{pde-u})
after one derivative in $x$ is taken in the limit $x \to 0$ and $t \to 0$.

It remains to prove that the system
of linear equations obtained from the system of integral equations
(\ref{integral-equation-pde-1}) and (\ref{integral-equation-pde-2})
can be solved at each order of $t^{n/4}$ for $n \in \N$. From the previous computations,
we can deduce that the first integral equation at $t^{(n+1)/4}$ gives a
linear equation on variables $(\beta_{n/4},V_{(n-3)/4})$ of the power series
(\ref{series-fractional}):
\begin{equation}
\label{lin-eq-1-new}
-\beta_{n/4} g''(0) \int_0^1 \frac{(1-x)^{n/4}}{x^{3/4}} dx +
\frac{4}{n+1} u_0'(0) V_{(n-3)/4} \int_0^{\infty} g'(z) z dz = \cdots,
\end{equation}
where the dots denote the terms expressed through derivatives of $u_0(x)$ at $x  = 0$
and the previous terms of the power series (\ref{series-fractional}). Similarly,
the second integral equation at $t^{n/4}$ gives another linear equation
on variables $(\beta_{n/4},V_{(n-3)/4})$:
\begin{equation}
\label{lin-eq-2-new}
\beta_{n/4} - \frac{8}{3} \beta_0 g(0) V_{(n-3)/4} = \cdots.
\end{equation}
The system of linear equations (\ref{lin-eq-1-new}) and (\ref{lin-eq-2-new}) is non-degenerate
if
\begin{equation}
\label{C-n}
C_n := -\frac{4(n+1)}{3} g(0) g''(0) \int_0^1 \frac{(1-x)^{n/4}}{x^{3/4}} dx
= \frac{(n+1)}{6 \sqrt{2} \pi}
\frac{\Gamma\left(\frac{n+4}{4}\right)\Gamma\left(\frac{1}{4}\right)}{\Gamma\left(\frac{n+5}{4}\right)}
\neq 1, \quad n \in \N.
\end{equation}
The left-hand side $\{ C_n \}_{n \in \N}$ is computed numerically (see Figure \ref{fig-2}). 
It is a monotonically
increasing sequence that approaches closely to $1$ at $n = 8$, where $C_8 \approx 0.96$,
and $n = 9$, where $C_9 \approx 1.04$. Therefore, the linear system is non-degenerate
and a unique solution for $(\beta_{n/4},V_{(n-3)/4})$ exists for any $n \in \N$.
\end{proof}
\begin{figure}[htbp]
\begin{center}
\includegraphics[height=5cm]{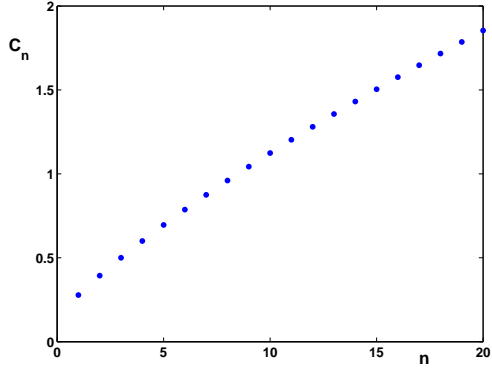}
\end{center}
\caption{Numerical approximations of $C_n$ defined by (\ref{C-n}).}
\label{fig-2}
\end{figure}

In the present time, we cannot prove yet that the system of integral equations
(\ref{integral-equation-pde-1}) and (\ref{integral-equation-pde-2}) leads to a finite-time
blow-up, according to the conjecture in \cite{Benilov}. Nevertheless, numerical computations
show that the blow-up holds for a generic set of initial data.
Figure \ref{fig-3} shows the behavior of functions $\beta(t)$ and $V(t)$ near the blow-up time. 
It follows from this figure that $\beta(t) = h_{xx}|_{x = 0} \to 0$
at the same time as $V(t) \to -\infty$ with $\beta(t) V(t)^{1/3} \to C_0$, where
$C_0 > 0$ is a numerical constant. In other words, we conclude with
the conjecture that $\beta(t) \sim V(t)^{-1/3}$ as $V(t) \to -\infty$ in a finite time $t_0 \in (0,\infty)$.
\begin{figure}[htbp]
\begin{center}
\includegraphics[height=6.5cm]{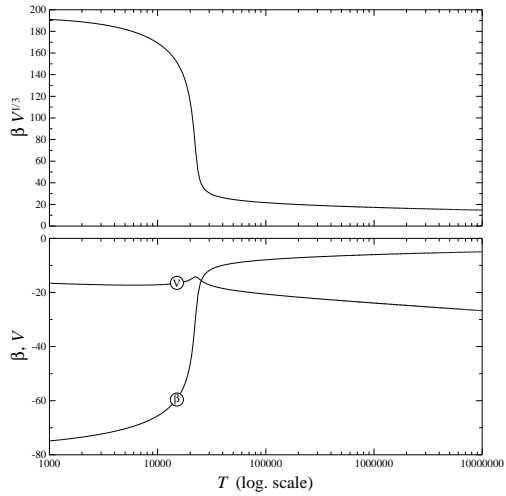}
\end{center}
\caption{Numerical computations of $\beta(t)$ and $V(t)$ for
the advection-diffusion equation (\ref{pde}). We thank 
the authors of \cite{Benilov} for this numerical figure.}
\label{fig-3}
\end{figure}

\appendix

\section{Green's function}

Let us define the fundamental solution of the fourth-order diffusion equation:
\begin{equation}
\label{Green}
\left\{ \begin{array}{l} h_t + h_{xxxx} = 0, \quad x \in \R, \;\; t > 0, \\
h |_{t = 0} = \delta(x), \quad \quad x \in \R, \end{array} \right.
\end{equation}
where $\delta$ is a standard Dirac delta-function in the distribution sense.
The fundamental solution is usually referred to as Green's function and we shall denote it
by
$$
h(x,t) = G_t(x), \quad x \in \R, \quad t \in \R_+.
$$

Using the Fourier transform in $x$, we can obtain the explicit expression for 
Green's function:
\begin{equation}
\label{Green-Fourier}
G_t(x) = \frac{1}{2\pi} \int_{-\infty}^{\infty} e^{-t k^4 + i k x} dk =
\frac{1}{\pi} \int_0^{\infty} e^{-t k^4} \cos(kx) dk.
\end{equation}
In particular, we have $G_t(-x) = G_t(x)$ for all $x \in \R$ and
\begin{eqnarray}
\label{bc-Green-1}
& \phantom{t} & G_t(0) = \frac{1}{\pi} \int_0^{\infty} e^{-t k^4} dk =
\frac{\Gamma(1/4)}{4 \pi t^{1/4}}, \\\label{bc-Green-2}
& \phantom{t} & G_t''(0) = -\frac{1}{\pi} \int_0^{\infty} k^2 e^{-t k^4} dk
= -\frac{\Gamma(3/4)}{4 \pi t^{3/4}},
\end{eqnarray}
where $\Gamma$ is the standard Gamma function.
The Green's function can be represented in the self-similar form by
\begin{equation}
\label{green-self-similar}
G_t(x) = \frac{1}{t^{1/4}} g\left(\frac{x}{t^{1/4}}\right), \quad
g(z) = \frac{1}{\pi} \int_0^{\infty} e^{-k^4} \cos(kz) dk,
\end{equation}
where $g \in L^2(\R) \cap L^{\infty}(\R)$. Therefore, $G_t$ decays
to zero as $t \to \infty$ in any $L^p$ norm for $p \geq 2$.
In particular, $| G_t(x) | \leq \| g \|_{L^{\infty}}/t^{1/4}$,
$| G_t'(x) | \leq \| g' \|_{L^{\infty}}/t^{1/2}$, and so on, for any $x \in \R$.

By the stationary phase method (see, e.g., Chapter 5 in \cite{Miller}),
$g(z)$ and all derivatives of $g(z)$ decay to zero as $|z| \to \infty$
faster than any algebraic powers. This gives the decay of
$G_t(x)$ and any $x$-derivative of $G_t(x)$ as $|x| \to \infty$
for any fixed $t > 0$. Although $G_t$ and $g$ are not $L^1$ functions, they satisfy
the normalization conditions:
\begin{equation}
\label{green-normalization}
\int_{\R} G_t(x) dx = \int_{\R} g(z) dz = 1, \quad t > 0.
\end{equation}

The even function $g : \R \to \R$ satisfies the ordinary differential equation
\begin{equation}
4 \frac{d^4 g}{d z^4} = g + z \frac{d g}{d z}, \quad z \in \R,
\end{equation}
subject to the initial values
\begin{equation}
g(0) = \frac{1}{4 \pi} \Gamma\left(\frac{1}{4}\right), \quad g'(0) = 0, \quad
g''(0) = -\frac{1}{4 \pi} \Gamma\left(\frac{3}{4}\right), \quad g'''(0) = 0,
\end{equation}
and the decay behavior as $|z| \to \infty$. It is clear from
the differential equation that $g \in C^{\infty}(\R)$ 
satisfies a number of integral constraints:
\begin{eqnarray}
\label{integral1}
\int_0^{\infty} z g(z) dz & = & -4 g''(0), \\
\int_0^{\infty} z^2 g(z) dz & = & 0, \\
\int_0^{\infty} z^3 g(z) dz & = & -8 g(0), \\
\int_0^{\infty} z^4 g(z) dz & = & -12, \\
\label{integral5}
\int_0^{\infty} z^5 g(z) dz & = & 16 4! g''(0),
\end{eqnarray}
and so on.

\end{document}